\documentclass[aps,twocolumn,showpacs,floatfix,superscriptaddress,prl]{revtex4}
\usepackage{amsmath,amsfonts,amssymb,amsthm,graphicx,color,times,bbm,psfrag,dsfont, braket}
\usepackage{enumerate}
\usepackage{color}
\usepackage[T1]{fontenc}
\usepackage[latin9]{inputenc}

\newcommand{\be}{\begin{equation}}
\newcommand{\ee}{\end{equation}}

\newtheorem{theorem}{Theorem}

\newtheorem{example}{Example}

\newcommand{\ZZ}{\ensuremath{\mathbb{Z}}}

\newcommand{\AME}[1]{\ensuremath{\textrm{AME}(#1)}}
\newcommand{\QSS}[1]{\ensuremath{((#1))} threshold QSS scheme}
\newcommand{\floor}[1]{\ensuremath{\lfloor #1 \rfloor}}

%definition of colors for comments

\begin{document}

\title{Absolute Maximal Entanglement and Quantum Secret Sharing}

\author{Wolfram Helwig}
\author{Wei Cui}
\affiliation{Center for Quantum Information and Quantum Control (CQIQC),
Department of Physics and Department of Electrical \& Computer Engineering,
University of Toronto, Toronto, Ontario, M5S 3G4, Canada}

\author{Jos\'e Ignacio Latorre}
\affiliation{
Dept. d'Estructura i Constituents de la Mat\`eria,
Universitat de Barcelona, 647 Diagonal, 08028 Barcelona, Spain
}

\author{Arnau Riera}
\affiliation{
Max Planck Institute for Gravitational Physics,
Albert Einstein Institute,
Am M\"uhlenberg 1, D-14476 Golm, Germany
}
\affiliation{
Dahlem Center for Complex Quantum Systems,
Freie Universit\"at Berlin, 14195 Berlin, Germany
}

\author{Hoi-Kwong Lo}
\affiliation{Center for Quantum Information and Quantum Control (CQIQC),
Department of Physics and Department of Electrical \& Computer Engineering,
University of Toronto, Toronto, Ontario, M5S 3G4, Canada}

\date{\today}

\begin{abstract}         
We study the existence of absolutely maximally entangled (AME) states in quantum mechanics and its applications to quantum information. AME states are characterized by being maximally entangled for all bipartitions of the system and exhibit genuine multipartite entanglement. With such states, we present a novel parallel teleportation protocol which teleports multiple quantum states between groups of senders and receivers. 
The notable features of this protocol are that (\emph{i}) the partition into senders and receivers can be chosen after the state has been distributed, and (\emph{ii}) one group has to perform joint quantum operations while the parties of the other group only have to act locally on their system. 
We also prove the equivalence between pure state quantum secret sharing schemes and AME states with an even number of parties.  This equivalence implies the existence of AME states for an arbitrary number of parties based on known results about the existence of quantum secret sharing schemes.
\end{abstract}
\pacs{}
\maketitle

{\sl Introduction.}
Entanglement is at the core of the power of quantum information processing and has been extensively studied for few qubits.
The classification of entanglement classes for three and four qubits
is well understood \cite{Dur2000, Verstraete2002, Higuchi2000, Levay2006, Luque2003, Acin2000, Brierley2007} and canonical forms of
pure states under local unitary transformations 
of each local Hilbert space 
have also been analyzed \cite{Acin2000, Kraus2010, Kraus2010a}. As the
number of local quantum degrees of freedom increases, our understanding
of entanglement gets poorer. The number of independent invariants that
classify entanglement grows exponentially and it is unclear which 
purpose each category of entanglement serves \cite{Miyake2002,Gelfand1994}. 
In recent years, there has been 
an important progress in the classification of the maximally multipartite entangled states composed of qubits 
\cite{Osterloh2006,Brown2005,Facchi2008,Gour2010,Brierley2007}.
Nevertheless, a complete understanding of the structure, classification and
usefulness of quantum states with the largest possible entanglement for arbitrary dimension is still missing.
Another motivation for studying multipartite entanglement is its connection to other apparently
unrelated areas of physics, like string theory and black-holes \cite{Borsten2010, Borsten2011}.

Quantum teleportation is one of the most intriguing utilizations of entanglement. It allows distant parties, who share a resource of entanglement, to transport a quantum state from one party to the other by only exchanging classical information and using up said entanglement. The first proposal of such a protocol used the resource of bipartite entanglement between two parties \cite{Bennett1993}. Later teleportation protocols using genuine multipartite entanglement between more than two parties were proposed theoretically for four qubit entanglement \cite{Yeo2006}, and experimentally in the form of open-destination teleportation for five qubits \cite{Zhao2004}.

This manuscript is devoted to initiate the study of a class of states with genuine multipartite entanglement. These states, which we call absolutely maximally entangled (AME) states, are defined
as having the strict maximal entanglement in all bipartitions of the system. 
Up until now, AME states have been thought to be a rather limited concept, because only few AME states exist for qubits \cite{acknowledge}, specifically no AME states exist for four, or eight and more qubits \cite{Gour2010, Rains1999}.
In this work, we consider the \emph{qudit} problem, and show that AME states exist for any number of parties 
by choosing an appropriate qudit dimension. 

The fact that AME states contain
maximal entanglement makes them the natural candidates to 
implement novel multipartite communication protocols. 
Indeed, we shall here show how they can be used to implement novel parallel teleportation scenarios that postpone the choice of senders and
receivers until after the state has been distributed. These protocols require that either the senders or receivers perform joint quantum operations, while the respective other parties only have to act locally on their systems. We further establish a one-to-one correspondence between pure state quantum secret sharing (QSS) schemes \cite{Cleve1999, Gottesman2000} and even-party AME states, which also proves the existence of AME states for \emph{any} number of parties given an appropriate choice of the system dimensions. This follows from the existence of pure state QSS schemes for any odd number of parties \cite{Cleve1999}.
It should be mentioned that, while our parallel teleportation protocol is different from the aforementioned open-destination teleportation, it is also possible to implement open-destination teleportation with AME states \cite{Helwigtobepublished}.

{\sl Definition of AME states.}
An \AME{n,d} state (absolutely maximally entangled state) of $n$ qudits of dimension $d$, $|\psi\rangle\in {\mathbb{C}}_d^{\otimes n}$, 
is a pure state for which every bipartition of the system into the sets $B$ and $A$, with 
$m=|B|\leq |A| = n-m$, is strictly maximally entangled such that
\be
\label{eq:MES-definition}
S(\rho_B) = m \log_2 d \, . %\, , \ \ \ \ \forall m \le \frac{n}{2} \, .
\ee
Consequently, every partition of $m$ local degrees of freedom
is represented by a reduced density matrix proportional to the identity
\begin{equation}
  \label{definitionAME}
  \rho_B=Tr_{A}|\psi\rangle\langle\psi|=\frac{1}{d^m}I_{d^m} \, , \qquad 1\le m\le \frac{n}{2} .
  \end{equation}
In practice, to detect an AME state it is sufficient to check that 
 all the $\binom{n}{\floor{n/2}}$ possible bipartitions 
of $\floor{n/2}$ qudits are maximally entangled, since all subsequent
traces of the identity matrix are again identity matrices.

A state is an AME state iff it can be written as
\begin{equation}
	\label{eq:defAMEstate}
	\ket{\textrm{AME}} = 
	\frac{1}{\sqrt{d^m}}\sum_{k\in \ZZ_d^{m}} 
	\ket{k_1}_{B_1}\cdots \ket{k_{m}}_{B_{m}}
	\ket{\phi(k)}_A,
\end{equation}
with $\braket{\phi(k)|\phi(k')} = \delta_{kk'}$,
for every partition
into $m = |B| \leq |A| =n-m$ disjoint sets $B$ and $A$.

Two obvious examples of AME states are the Einstein-Rosen-Podolsky (EPR) and the Greenberger-Horne-Zeilinger (GHZ) states for two and three qubits, respectively.
In both cases, the entanglement entropy is maximal for all their partitions. 
It has been proven that there are no absolutely maximally entangled states for four qubits \cite{Gour2010}. AME states exist for five and six qubits \cite{Borras2007}, and a possible form for them will be given later in Example~\ref{example:QSStoAME}. No AME states exist for eight or more qubits \cite{Gour2010, Rains1999}. The existence of an \AME{7,2} state is still an open question, but it has been conjectured in Ref~\cite{Borras2007} that no such state exists. By increasing the party dimension, AME states can be found for these cases in which no qubit AME states exist. We remark, however, that, although we will show that for each $n$, \AME{n,d} states exist for some appropriate choice of $d$, finding the conditions for the
existence of AME($n$,$d$) states, depending on $n$ \emph{and} $d$, is generally a
non-trivial problem. In a future publication \cite{Helwigtobepublished}, we will show that, interestingly, a special class of AME states can be constructed from certain classical error correcting codes, namely those that satisfy the singleton bound \cite{MacWilliams1977}.

{\sl Parallel Teleportation.}
The maximal entanglement property of an \AME{n,d} state for any bipartition into the sets $A$ and $B$ can be used to teleport quantum states between those two sets. In contrast to the teleportation scenario where $A$ and $B$ share a maximally entangled state that is not an AME state, in the AME scenario the sets $A$ and $B$ do not have to be specified when the state is created, but instead can be chosen after the AME state has been distributed.

There are essentially three different ways in which the teleportation can be performed, depending on which parties can perform joint quantum operations, and which are separated and only able to perform local operations on their own quantum systems. 

In the first case, the parties within each set, $A$ and $B$, are able to perform joint quantum operations. A standard teleportation of an arbitrary $d^m$-dimensional state, where $m=\min(|A|,|B|)$, can be performed in either direction.

In the second case, the sending parties $A$ can perform a joint quantum operation, but the parties in $B$ are only able to perform local quantum operations. Additionally we require $m=|B|\leq|A|=n-m$. Then one qudit can be teleported from $A$ to each of the parties in $B$, and thus in total $m$ qudits are teleported from $A$ to $B$. This is illustrated in the left hand side of Figure~\ref{figure:teleport}.

In the third and probably the most interesting case, the sending parties can only perform local operations, but the receiving parties can perform joint quantum operations. In this case, a teleportation is possible if the number of receiving parties is larger or equal $n/2$. Hence, sticking to our convention $m=|B|\leq|A|$, we now consider a teleportation from $B$ to $A$. See the right hand side of Figure~\ref{figure:teleport} for an illustration.

\begin{figure}[t]
	\centering
	\includegraphics[scale=0.53]{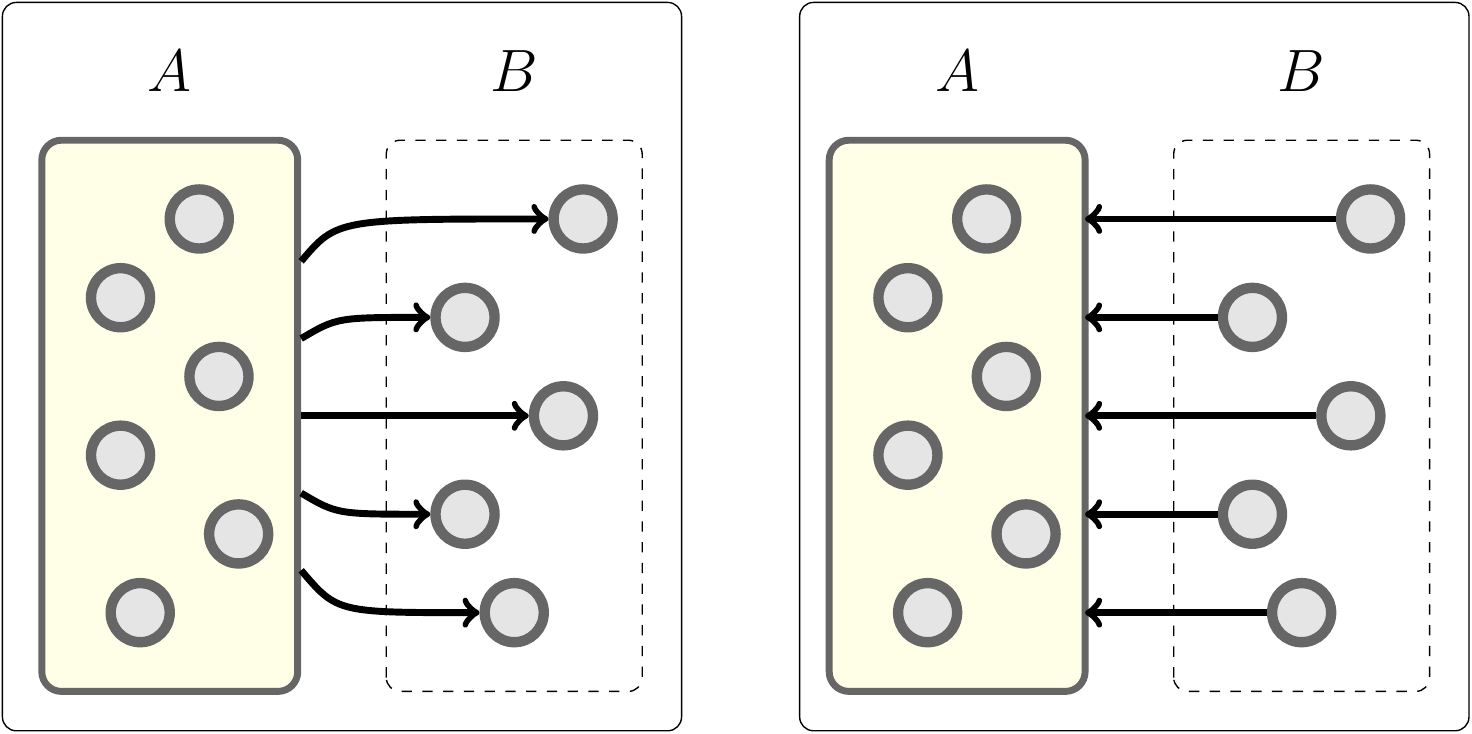}
	\caption{Parallel Teleportation scenarios of Theorem~\ref{theorem:tele}. Scenario (i) is on the left, and (ii) on the right. Parties in $A$ perform joint quantum operations, parties in $B$ only local quantum operations.}
	\label{figure:teleport}
\end{figure}

The first scenario is just a straightforward teleportation between maximally entangled parties. The second and third scenarios are presented in the following theorem.

\begin{theorem}
	\label{theorem:tele}
	Given an \AME{n,d} state, and a bipartition of the $n$ parties into the sets $A$ and $B$ such that $m = |B| \leq |A| = n-m$, then the following two parallel teleportation scenarios are possible
	\begin{enumerate}[(i)]
	 \item $A$  can teleport one qudit to each party in $B$ by performing a joint quantum operation and communicating two classical ``dits" to each party in $B$. Each party in $B$ can then locally recover their respective qudit with a local operation.
	 \item Each party in $B$ can locally teleport one qudit to $A$. After receiving the measurement outcomes, consisting of two ``dits'' of classical information from each party in $B$, the parties in $A$ are able to recover all $m$ qudits by performing a joint quantum operation.
	\end{enumerate}
\end{theorem}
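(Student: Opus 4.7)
The plan is to leverage the canonical form \eqref{eq:defAMEstate} of an AME state for the chosen bipartition $(A,B)$: this is a Schmidt decomposition realizing a maximally entangled state of Schmidt rank $d^m$ between $A$ and $B$, with the crucial feature that the $B$-side of each Schmidt vector is the product state $\ket{k_1}_{B_1}\cdots\ket{k_m}_{B_m}$ across the $m$ parties. Both protocols then follow the standard teleportation recipe applied to this effective $d^m$-dimensional maximally entangled pair, and the product structure on $B$ is precisely what makes the correction (scenario (i)) or the measurement (scenario (ii)) factor onto the individual parties of $B$.

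For scenario (i), I would treat $\{\ket{\phi(k)}_A\}_{k\in\ZZ_d^m}$ as the computational basis of an effective $d^m$-dimensional subspace of $A$'s Hilbert space, and have $A$ perform a generalized Bell measurement on its ancilla $A'$ (carrying the state $\ket{\chi}=\sum_j\chi_j\ket{j}_{A'}$ to be teleported) together with that subspace. The outcome is a pair $(a,b)\in\ZZ_d^m\times\ZZ_d^m$, which $A$ splits into pairs $(a_i,b_i)$ sent to the corresponding $B_i$. A routine projection, using the orthonormality of the $\ket{\phi(k)}_A$, shows that the state on $B$ is (up to global phase) $\sum_k\omega^{-b\cdot k}\chi_{k+a}\ket{k_1}_{B_1}\cdots\ket{k_m}_{B_m}$ with $\omega=e^{2\pi i/d}$; since this factorizes in the local computational bases, the generalized Pauli correction $Z^{b_i}X^{a_i}$ applied locally by $B_i$, using only their two dits $(a_i,b_i)$, recovers $\ket{\chi}$ on $B$.

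For scenario (ii), each party $B_i$ holds its own qudit $\ket{\chi_i}_{B'_i}=\sum_{j_i}\chi^{(i)}_{j_i}\ket{j_i}$ and performs the standard local Bell measurement on $(B'_i,B_i)$, obtaining $(a_i,b_i)$ and sending it to $A$. Expanding \eqref{eq:defAMEstate} and contracting each Bell bra against its factor, the state on $A$ becomes, up to a phase, $\sum_k\omega^{-b\cdot k}\bigl(\prod_i\chi^{(i)}_{k_i+a_i}\bigr)\ket{\phi(k)}_A$, whereas the desired target is $\sum_j\bigl(\prod_i\chi^{(i)}_{j_i}\bigr)\ket{\phi(j)}_A$. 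The two are related by the map $U_{a,b}:\ket{\phi(k)}_A\mapsto\omega^{b\cdot(k+a)}\ket{\phi(k+a)}_A$, which by orthonormality of $\{\ket{\phi(k)}_A\}$ is a well-defined isometry on a $d^m$-dimensional subspace of $\HH_A$ and extends to a unitary on all of $A$. This correction generically entangles the parties inside $A$, so $A$ must implement it jointly, but it depends only on the pooled classical data $(a,b)$ received from the $B_i$.

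The conceptual ingredient is that the AME property forces the Schmidt form \eqref{eq:defAMEstate} with the $B$-side in a local product basis, which is exactly what makes the teleportation corrections factor onto the individual parties in (i) and the measurements factor onto them in (ii). I expect the main obstacle in a fully rigorous write-up to be bookkeeping the phases and index shifts in the generalized Bell measurement so that the corrections $Z^{b_i}X^{a_i}$ and $U_{a,b}$ come out cleanly, without a stray convention-dependent Pauli or global phase; a secondary check is that the partial isometries $U_{a,b}$ really do extend to unitaries on $\HH_A$, which is immediate from $\dim\HH_A\geq d^m$, guaranteed by $|A|\geq m$.
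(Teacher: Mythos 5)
Your proposal is correct and is essentially the paper's own argument: both rest on the canonical form \eqref{eq:defAMEstate} of the AME state for the chosen bipartition and then run standard qudit teleportation across it, with the product structure of the $B$-side Schmidt vectors being what makes the corrections in (i) and the measurements in (ii) local on the parties of $B$. The only difference is presentational: the paper has $A$ first apply the joint unitary of Eq.~\eqref{eq:tele-Ua} to convert the AME state into $m$ $d$-dimensional EPR pairs and then teleports pairwise, whereas you absorb that unitary into the $d^m$-dimensional generalized Bell measurement in (i) and into the joint correction $U_{a,b}$ (plus a final basis change to local registers) in (ii) --- equivalent protocols, since that joint unitary on $A$ commutes with all operations performed in $B$.
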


\begin{proof}
In both scenarios the parties in set $A$ perform a joint quantum operation to transform the AME state into $m$ $d$-dimensional EPR pairs. Then these pairs are used to teleport $m$ qudits from the sending to the receiving parties. This is done by performing the joint unitary operation 
\begin{equation}
	\label{eq:tele-Ua}
	U_A \ket{\phi(k)}_A = \ket{k_1}_{A_1} \cdots \ket{k_{m}}_{A_{m}}
	\ket{0}_{A'}.
\end{equation}
on the initial \AME{n,d} state
\begin{equation}
	\label{eq:tele-ame}
	\ket{\Phi} = \frac{1}{\sqrt{d^{m}}}
	\sum_{k\in \ZZ_d^{m}} \ket{k_1}_{B_1} \cdots \ket{k_{m}}_{B_{m}}
	\ket{\phi(k)}_A,
\end{equation}
with $\braket{\phi(k)|\phi(k')}=\delta_{kk'}$. This results in the state
\begin{align}
	\label{eq:tele-eprs}
	U_A \ket{\Phi} 
	&= \ket{\Psi}_{B_1 A_1} \cdots \ket{\Psi}_{B_m A_m}
	\ket{0}_{A'},
\end{align}
where $\ket{\Psi} = \sum \ket{i}\ket{i}$ are $d$-dimensional EPR pairs. These EPR pairs can now be used to teleport a qudit from $A_i$ to $B_i$ in case (i) ($B_i$ to $A_i$ in case (ii)). This requires $A_i$ ($B_i$) to perform a generalized Bell measurement on her qudit and the qudit she wants to teleport, and communicate the measurement result to $B_i$ ($A_i$). This amounts to sending the classical information of two ``dits" for each EPR pair. Upon reception of the measurement result, $B_i$ ($A_i$) can recover the teleported qudit by performing an appropriate unitary on his qudit.
\end{proof}

{\sl Quantum Secret Sharing.}
The last teleportation scenario suggests a close relationship between AME states and quantum secret sharing (QSS) schemes \cite{Cleve1999}. 
In a QSS protocol \cite{Cleve1999, Gottesman2000}, a dealer encodes a secret $S$ in a quantum state that is shared among $n$ players in such a way that only special subsets of players are able to recover the secret. The set of all subsets that are able to recover the secret form the access structure and the set of all subsets that can gain no information about the secret form the adversary structure. If the encoded state is a pure state, we call it a pure state QSS scheme. We are only interested in pure state QSS schemes here.

Additionally, we restrict our attention to threshold QSS schemes \cite{Cleve1999}, which means that the access structure is formed by all sets that contain $k$ or more number of parties, while any set with less than $k$ parties cannot obtain any information about the secret. Thus $k$ is the threshold number of parties required to recover the secret. Such a QSS scheme is denoted as a \QSS{k,n}. For pure state threshold QSS schemes, $n$ and $k$ are always related by $n=2k-1$.

To see the relation between AME states and threshold QSS schemes, we consider an \AME{2m,d} state with an even number of parties and divide the parties into two sets $A=\{A_1,\ldots,A_m\}$ and $B=\{D,B_1,\ldots,B_{m-1}\}$ of equal size $m$. In set $B$ we have singled out one party $D$, which will act as the dealer of the QSS scheme. Now we perform the protocol of Theorem~\ref{theorem:tele} (ii), but only $D\in B$ performs the final teleportation operation. Also note that the unitary operation in Equation~\eqref{eq:tele-Ua} and the Bell measurement by the dealer commute. Thus, $D$ can first perform her Bell measurement, effectively encoding the teleported qudit onto the residual AME state, from which it can be recovered by the players in $A$.

Furthermore, instead of the bipartition into the sets $A$ and $B$, we could have equally well chosen any other bipartition into two sets $A'$ and $B'$ of cardinality $m$ with $D \in B'$. Then, without changing the operations that $D$ has to perform, the parties in $A'$ are able to recover the teleported qudit (see Figure~\ref{fig:AMEtoQSS} for an illustration).

\begin{figure}[t]
	\centering
	\includegraphics[width=\linewidth]{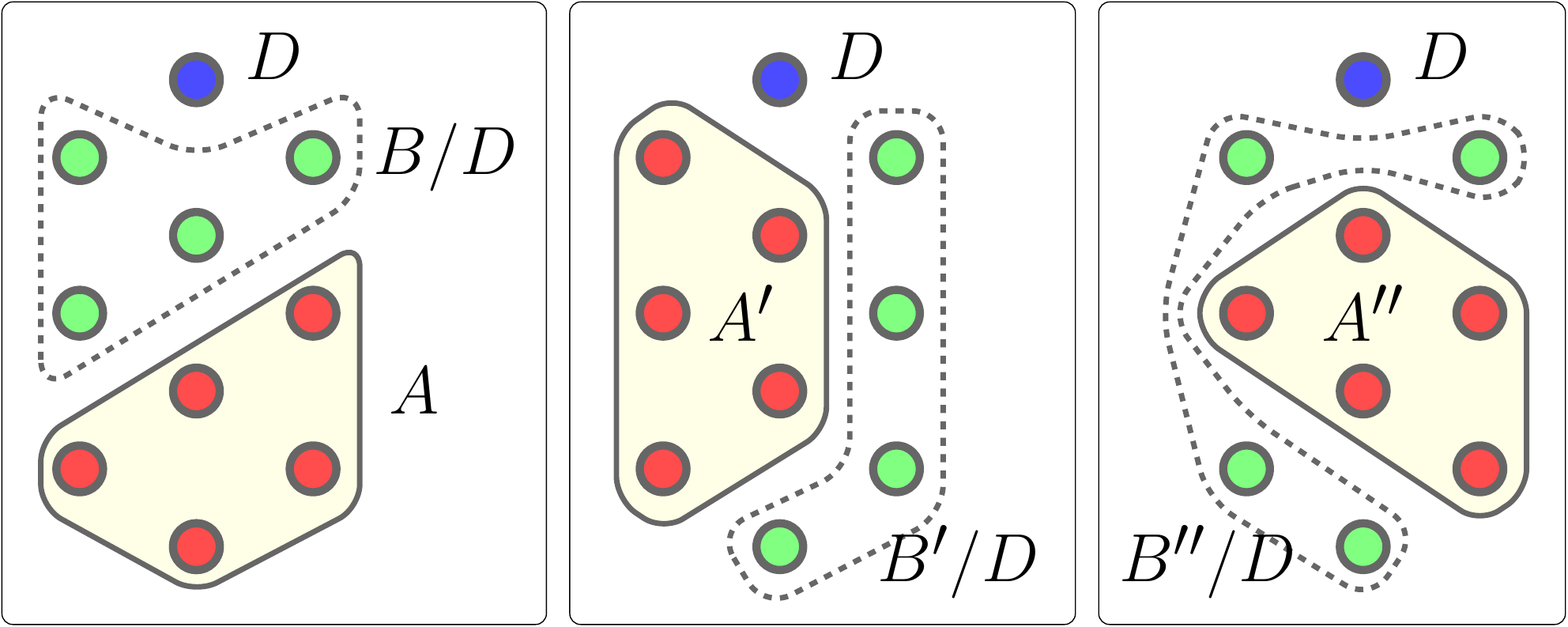}
	\caption{(Color online) After $D$ (blue) performs her teleportation operation, any set of $m$ parties (red), $A$, $A'$, $A''$ etc., can recover the teleported state. Any set of parties with $m-1$ or less parties (any set consisting only of green parties) cannot gain any information about the teleported state.}
	\label{fig:AMEtoQSS}
\end{figure}

Thus, any set of at least $m$ of the residual $2m-1$ parties without $D$ can recover the teleported state, given that the measurement outcome is broadcasted to all parties. Furthermore, the no-cloning theorem guarantees that any set of less than $m$ players cannot gain any information about the state \cite{Gottesman2000}. Hence we accomplished to construct a \QSS{m,2m-1} from an \AME{2m,d} state. 

Before stating the theorem that formulates this observation concisely, we shortly review how a QSS protocol works. A secret of dimension $d$, $\ket{S} = \sum a_i \ket{i}$, is encoded into the state $\sum a_i \ket{\Phi_i}$ which is shared by the players such that each authorized set can deterministically recover $\ket{S}$ from its reduced state, while the reduced state of unauthorized sets is independent of the encoded secret. We call $\ket{\Phi_i}$ the basis  states of the QSS scheme, and we show in  \cite{Helwigtobepublished} that they are AME states for pure state threshold QSS schemes with equal share and dimension size.

\begin{theorem}
	\label{theorem:AME-QSS}
There is a one to one correspondence between an \AME{2m,d} state and a pure state $((m,2m-1))$ threshold QSS scheme, whose share and secret dimensions are $d$.
\end{theorem}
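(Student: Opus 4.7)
The plan is to establish the correspondence in both directions and observe that the two constructions are mutually inverse. The forward direction, from an \AME{2m,d} state to a pure state \QSS{m,2m-1}, is essentially the argument immediately preceding the theorem: starting from an \AME{2m,d} state $\ket{\Psi}$ I would designate one party as the dealer $D$, use Eq.~\eqref{eq:defAMEstate} with $|B|=1$ to write $\ket{\Psi}=\tfrac{1}{\sqrt d}\sum_i\ket{i}_D\ket{\Phi_i}$, and take the orthonormal family $\{\ket{\Phi_i}\}$ as the QSS basis. Recovery by any $m$-subset of the $2m-1$ players is then Theorem~\ref{theorem:tele}(ii) applied to the corresponding bipartition, while impossibility of any information gain by any $(m-1)$-subset follows from no-cloning~\cite{Gottesman2000}, exactly as sketched above the statement.

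For the reverse direction, starting from a pure state \QSS{m,2m-1} with share and secret dimensions equal to $d$ and orthonormal basis $\{\ket{\Phi_i}\}_{i=0}^{d-1}$, I would form the candidate state $\ket{\Psi}=\tfrac{1}{\sqrt d}\sum_{i=0}^{d-1}\ket{i}_D\ket{\Phi_i}$ on the $2m$ parties $\{D\}\cup\mathcal{P}$, with $\mathcal{P}$ the $2m-1$ players, and verify condition~\eqref{definitionAME} for every $m$-subset $T$. I would split on whether $D\in T$. If $D\in T$, write $T=\{D\}\cup E$ with $E\subset\mathcal{P}$ an $(m-1)$-adversary set and let $F=\mathcal{P}\setminus E$ be the complementary $m$-authorized set; demanding that $\sum_{ij}a_i a_j^*\,\mathrm{Tr}_F\ket{\Phi_i}\bra{\Phi_j}$ be independent of the $a_i$ (the adversary condition applied to the general encoded state $\ket{S}=\sum a_i\ket{i}$) forces $\mathrm{Tr}_F\ket{\Phi_i}\bra{\Phi_j}=\delta_{ij}\sigma_E$, and hence $\rho_T=(\Id_D/d)\otimes\sigma_E$. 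If $D\notin T$, then $T$ is itself an $m$-authorized subset of $\mathcal{P}$ with complementary adversary set $E=\mathcal{P}\setminus T$; combining the Schmidt decomposition $\ket{\Phi_i}=\sum_k\sqrt{\lambda_k}\ket{u^i_k}_T\ket{v_k}_E$, whose spectrum $\{\lambda_k\}$ and vectors $\ket{v_k}$ are $i$-independent by the adversary property, with the authorization condition that the supports of the $\rho^i_T=\mathrm{Tr}_E\ket{\Phi_i}\bra{\Phi_i}$ are mutually orthogonal yields $\rho_T=\tfrac{1}{d}\sum_i\rho^i_T$.

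Both cases collapse onto the single lemma $\sigma_E=\Id_E/d^{m-1}$, which I expect to be the main obstacle. Equivalently, each QSS basis state $\ket{\Phi_i}$ must itself be maximally entangled across every balanced $(m-1\,|\,m)$ cut of the $2m-1$ players. I would invoke the fact --- deferred in this paper to the companion work~\cite{Helwigtobepublished} --- that the combined hypotheses of purity, threshold $((m,2m-1))$, and matched share and secret dimensions $d$ place the $d$-dimensional code subspace exactly on the quantum Singleton bound, making it a quantum MDS code whose reduced states on any subset strictly smaller than the code distance $m$ are necessarily maximally mixed. Granting this, case one gives $\rho_T=\Id/d^m$ immediately; in case two the flat Schmidt weights $\lambda_k=1/d^{m-1}$ turn each $\rho^i_T$ into $(1/d^{m-1})P^i_T$ for a rank-$d^{m-1}$ projector $P^i_T$, and the orthogonality of the $d$ supports together with the dimension count $d\cdot d^{m-1}=d^m=\dim\HH_T$ forces $\sum_i P^i_T=\Id_T$ and thus $\rho_T=\Id_T/d^m$. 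I would close by remarking that the two constructions are literally inverse --- expanding the AME state along the $\ket{i}_D$ basis returns the QSS basis, and superposing that basis against a reference $D$ returns the AME state --- which establishes the one-to-one correspondence.
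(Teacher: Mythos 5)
Your proposal is correct and follows essentially the same route as the paper: both directions rest on exactly the two facts the paper invokes --- recoverability by authorized sets forces the orthogonality in Eq.~\eqref{eq:QSStoAME-phi_ortho}, and the QSS basis states $\ket{\Phi_i}$ are themselves AME, a fact the paper likewise defers to the companion work~\cite{Helwigtobepublished}. Your reverse-direction bookkeeping via reduced density matrices (the case split on whether $D$ lies in the $m$-set, the flat Schmidt spectrum, and the rank count $d\cdot d^{m-1}=d^{m}$) is a rephrasing of the paper's more compact check that $\ket{\Phi}=\tfrac{1}{\sqrt d}\sum_i\ket{i}_D\ket{\Phi_i}$ attains the canonical maximally entangled form of Eq.~\eqref{eq:defAMEstate} across every balanced cut containing the dealer.
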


\begin{proof}
\emph{AME to QSS}: 
For any bipartition into parties $A=\{A_1,\ldots,A_m\}$ and $B=\{D,B_1,\ldots,B_{m-1}\}$, the \AME{2m,d} states has the form
\begin{equation}
	%\label{eq:AMEtoQSS-ame2m}
	\ket\Phi = 
	\frac{1}{\sqrt{d^m}}\sum_{(i,k)\in \ZZ_d^{m}} 
	\ket{i}_D \ket{k_1}_{B_1}\cdots \ket{k_{m-1}}_{B_{m-1}}
	\ket{\phi(i,k)}_A \, ,
   \nonumber
\end{equation}
with $\braket{\phi(k,i)|\phi(k',j)}=\delta_{kk'}\delta_{ij}$.
We define the QSS basis states 
\begin{align}
	\label{eq:AMEtoQSS-qss}
	\ket{\Phi_i} &= \sqrt{d}\ {}_{D}\!\!\braket{i|\Phi} \nonumber \\
	&= \frac{1}{\sqrt{d^{m-1}}}
	\sum_{k\in \ZZ_d^{m-1}} \ket{k_1 \cdots k_{m-1}}_{B}
	\ket{\phi(k,i)}_A.
\end{align}
A secret encoded as 
\begin{equation}
	\ket{a} = \sum a_i \ket{i} \rightarrow
	\sum a_i \ket{\Phi_i},
\end{equation}
satisfies the requirement of a threshold QSS scheme, because the parties $B$ have a completely mixed states, independent of the encoded secret. Additionally, the set $A$, which can be chosen to be any set of $n$ players, can restore the secret $\ket{a}$ by performing the joint unitary operation
\begin{equation}
	\label{eq:AMEtoQSS:Usecret}
	U_A \ket{\phi(k,i)}_A = \ket{k_1}_{A_1} \cdots \ket{k_{m-1}}_{A_{m-1}} \ket{i}_{A_m}.
\end{equation}

\emph{QSS to AME}:
For any bipartition into $m$ authorized parties $A=\{A_1,\ldots,A_m\}$ and $m-1$ unauthorized parties $B=\{B_1,\ldots,B_{m-1}\}$, the AME basis states of the QSS scheme can be written in the form
\begin{equation}
	%\label{eq:qss_ame-basis}
	\ket{\Phi_i} = \frac{1}{\sqrt{d^{m-1}}}
	\sum_{k\in \ZZ_d^{m-1}} \ket{k_1}_{B_1} \cdots \ket{k_{m-1}}_{B_{m-1}}
	\ket{\phi(k,i)}_A,
	\nonumber
\end{equation}
where $\braket{\phi(k,i)|\phi(k',i)}=\delta_{kk'}$, because the states are AME states, and $\braket{\phi(k,i)|\phi(k,j)}=\delta_{ij}$, because the authorized parties can recover the secret deterministically. Thus,
\begin{equation}
	\label{eq:QSStoAME-phi_ortho}
	\braket{\phi(k,i)|\phi(k',j)}=\delta_{kk'}\delta_{ij}.
\end{equation}
From these basis states, define the state
\begin{align}
	%\label{eq:QSStoAME}
	\ket\Phi &= \frac{1}{\sqrt{d}}\sum_{i\in\ZZ_d} \ket{i}\ket{\Phi_i}
	\nonumber \\
	&= 
	\frac{1}{\sqrt{d^m}}\sum_{(i,k)\in \ZZ_d^{m}} \ket{i}_D \ket{k_1}_{B_1}\cdots \ket{k_{m-1}}_{B_{m-1}}
	\ket{\phi(k,i)}.
	\nonumber
\end{align}
Because of Equation~\eqref{eq:QSStoAME-phi_ortho}, $\ket\Phi$ is a maximally entangled state with respect to the bipartition $B\cup \{D\}$ vs. $A$. Since the original bipartition into $A$ and $B$ was arbitrary, $\ket\Phi$ is maximally entangled with respect to any bipartition into two cardinality $m$ sets and thus is an \AME{2m,d} state.
\end{proof}

Since it is known that \QSS{m,2m-1} exist for any number of $m$ and an appropriate choice of $d$ \cite{Cleve1999}, 
%\cref{theorem:tele} 
Theorem~\ref{theorem:AME-QSS} proves the existence of AME states for any number of parties.

\begin{example}
\label{example:QSStoAME}
In this example, we show how the five qubit code can be used to construct \AME{5,2} and \AME{6,2} states. From the five qubit code a \QSS{3,5} can be constructed \cite{Cleve1999}. The corresponding basis states are 
\begin{gather}
\begin{split}
 \ket{0_L}= \frac{1}{4}
 (& \ket{00000}+\ket{10010}+\ket{01001}+\ket{10100}\\
 +& \ket{01010}-\ket{11011}-\ket{00110}-\ket{11000}\\
 -& \ket{11101}-\ket{00011}-\ket{11110}-\ket{01111}\\
 -& \ket{10001}-\ket{01100}-\ket{10111}+\ket{00101}),
\end{split}\\
\begin{split}
 \ket{1_L}= \frac{1}{4}
 (& \ket{11111}+\ket{01101}+\ket{10110}+\ket{01011}\\
 +& \ket{10101}-\ket{00100}-\ket{11001}-\ket{00111}\\
 -& \ket{00010}-\ket{11100}-\ket{00001}-\ket{10000}\\
 -& \ket{01110}-\ket{10011}-\ket{01000}+\ket{11010}).
\end{split}
\end{gather}
These states are \AME{5,2} states as required. Following the receipe of Theorem~\ref{theorem:AME-QSS}, we obtain the \AME{6,2} state
\begin{gather}
\begin{split}
 \ket{\Phi}= & \frac{1}{\sqrt{2}}[\ket{0}\ket{0_L}+\ket{1}\ket{1_L}]\\
=&\frac{1}{4}[\ket{000}(\ket{+-+}+\ket{-+-})\\
&+\ket{001}(-\ket{+--}+\ket{-++})\\
&+\ket{010}(\ket{++-}-\ket{--+})\\
&+\ket{011}(-\ket{+++}-\ket{---})\\
&+\ket{100}(-\ket{+++}+\ket{---})\\
&+\ket{101}(-\ket{++-}-\ket{--+})\\
&+\ket{110}(-\ket{+--}-\ket{-++})\\
&+\ket{111}(-\ket{+-+}+\ket{-+-})].
\end{split}
\end{gather}
\end{example}

{\sl Conclusion.}
In this manuscript, we have introduced AME states, a class of highly entangled states, for $n$ qudits shared among $n$ locally separated parties. 
Previous investigations of maximal entanglement showed that AME states do not exist when the number of qubits is eight or larger. Here we proved the existence of AME states for any number of parties with the appropriate qudit dimension. Moreover, we have shown how they can be utilized in different parallel teleportation scenarios, which require some parties to perform joint quantum operations, while others' capabilities may be restricted to local operations. In those scenarios the advantage of AME states over less entangled states like a collection of EPR pairs lies in the fact that the partition into senders and receivers may be chosen after the state has been distributed. 

Furthermore, we have investigated the relationship of AME states with QSS schemes and established a one-to-one correspondence between even party AME states and pure state threshold QSS schemes. 
This correspondence allows us to prove the existence of AME states for any number of parties with the appropriate dimension.
In future work we further explore this very intuitive approach to develop new communication protocols from AME states as well as extending the range of QSS schemes that can be derived from AME states. For instance, instead of assigning the role of the dealer to only one of the parties in the AME state, we can imagine multiple dealers who encode independent secrets onto the residual AME states, resulting in QSS schemes with more involved access structures.
The established connection to QSS schemes also confirms a relation between AME states and quantum error correction codes that was
already suggested in Ref.~\cite{Scott2004}.
A better understanding of this relation will allow us to construct new quantum error correction codes from AME states as well as deriving AME states from already known quantum codes.
This might also shed light upon the open question of existence of AME states for a given number of parties and system dimension.

{\sl Acknowledgments.}
W.H., W.C., and H.K.L. acknowledge financial support from funding agencies including NSERC, Quantum-Works, the CRC program and CIFAR. J.I.L. and A.R. acknowledge financial support from  MICIN (Spain) and Grup consolidat (Generalitat de Catalunya).
We would also like to thank David Gosset and Sandu Popescu for very helpful comments.

\bibliography{AME_QSS}

\end{document}